\newtheorem{theorem}{Theorem}
\newtheorem{corollary}{Corollary}
\let\old@ps@headings\ps@headings
\let\old@ps@IEEEtitlepagestyle\ps@IEEEtitlepagestyle
\def\psccfooter#1{%
    \def\ps@headings{%
        \old@ps@headings%
        \def\@oddfoot{\strut\hfill#1\hfill\strut}%
        \def\@evenfoot{\strut\hfill#1\hfill\strut}%
    }%
    \def\ps@IEEEtitlepagestyle{%
        \old@ps@IEEEtitlepagestyle%
        \def\@oddfoot{\strut\hfill#1\hfill\strut}%
        \def\@evenfoot{\strut\hfill#1\hfill\strut}%
    }%
    \ps@headings%
}
        \parbox{\textwidth}{\hrulefill \\ \small{22nd Power Systems Computation Conference} \hfill \begin{minipage}{0.2\textwidth}\centering \vspace*{4pt} \includegraphics[scale=0.06]{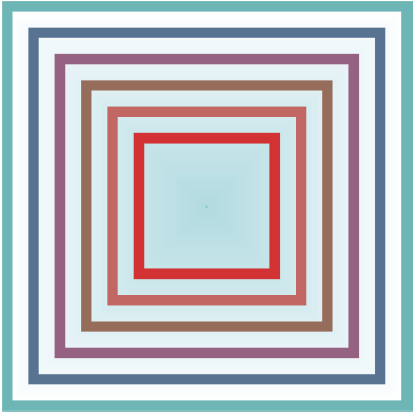}\\\small{PSCC 2022} \end{minipage} \hfill \small{Porto, Portugal --- June 27 -- July 1, 2022}}%
\begin{document}
\title{Decentralized Safe Reinforcement Learning for Voltage Control}

%% To specify the authors when (number of affiliations <= 2)
\author{
\IEEEauthorblockN{Wenqi Cui, Jiayi Li and Baosen Zhang}
\IEEEauthorblockA{Electrical and Computer Engineering,
University of Washington, 
Seattle, WA \\
\{wenqicui, ljy9712, zhangbao\}@uw.edu}
}

% make the title area
\maketitle

% As a general rule, do not put math, special symbols or citations
% in the abstract
\begin{abstract}
 Inverter-based distributed energy resources provide the possibility for fast time-scale voltage control by quickly adjusting their reactive power. The power-electronic interfaces allow these resources to realize almost arbitrary control law, but designing these decentralized controllers is nontrivial. Reinforcement learning (RL) approaches are becoming increasingly popular to search for policy parameterized by neural networks.
 It is difficult, however, to enforce that the learned controllers are safe, in the sense that they may introduce instabilities into the system. 

This paper proposes a safe learning approach for voltage control.
We  prove  that  the  system  is  guaranteed to  be  exponentially  stable  if  each  controller  satisfies  certain Lipschitz constraints. 
The set of Lipschitz bound is optimized to enlarge the search space for neural network controllers.
We explicitly engineer the structure of neural network controllers such that they satisfy the Lipschitz constraints by design. A decentralized RL framework is constructed to train local neural network controller at each bus in a model-free setting.

%  Inverter-based distributed energy resources provide the possibility for fast time-scale voltage control by quickly adjusting their reactive power. 
%  Designing decentralized linear controllers is nontrivial and learning-based method are becoming increasingly popular.
% But it is difficult to enforce that the learned controllers stabilize the system. In addition, it's often hard to characterize the global optimality of the learned solutions. 
% % quality of the learned solutions, especially global or local optimality. 

% % Fast time-scale voltage control is essential for
% % high penetration of renewables in distribution networks. It is efficient to control the reactive power injection of inverters to keep voltages to a nominal value. 

% % Inverter-based distributed energy resources has the capability to adjust reactive power injections quickly, providing possibility for fast timescale frequency control.

% This paper proposes a safe learning algorithm for voltage control with stability guarantees. We prove that the set of stabilizing linear feedback controllers is convex. 
% For quadratic cost functions, 
% we can show a gradient-like update based on local measurements converge to the global optimal coefficients. 

% % We also show that the voltage control problem with quadratic cost is convex with respect to the linear feedback coefficients, which guarantee that learning will converge to the global optimum. 
\end{abstract}

\begin{IEEEkeywords}
Safe RL, Voltage control, Stability, Decentralized Learning
\end{IEEEkeywords}

% Use this to place sponsorships
\thanksto{\noindent The authors are partially supported by NSF grants ECCS-1942326, CNS-1931718 and the Washington Clean Energy Institute.}

\section{Introduction} \label{sec:intro}
Distributed energy resources (DERs) such as rooftop solar PV, electric vehicles and battery storage are growing at an increasing pace. For example, solar capacity had almost 50\% yearly growth in 2021~\cite{Solar21}, which is by far the fastest among all renewable resources. Most of these growth are occurring in the distribution network, the low voltage network that connects customers to substations. 

High variability of solar PV and sudden change in load due to electric vehicles and storage can lead to large voltage fluctuations. These fluctuations occur at timescales much faster than the conventional mechanical control devices such as tap-changing transformers. Instead,  power electronic devices allow flexible and frequent control actions without degrading lifetime. Consequently, there have been growing interests to use the power electronic inverters on the DERs themselves to provide voltage control~\cite{turitsyn2011options,yeh2012adaptive,zhang2014optimal,li2014real,bolognani2013distributed}.

Since most distribution networks are not yet equipped with real-time communication infrastructure, voltage control strategies should use local measurements available at each bus. 
More specifically, controllers need to operate at an iterative fashion~\cite{zhu2015fast,qu2019optimal}, successively updating their control actions based on each measurement. Designing such decentralized controller is a nontrivial problem. Linear controllers can be far from optimal, even for quadratic costs. 
Therefore, neural networks have been used to parametrize the controllers to fully utilize the capabilities of the inverters~\cite{hsu1998combined,toma2008decentralized,shen2019distributed}.

\begin{figure}[ht]	
	\centering
	\includegraphics[width=3.2in]{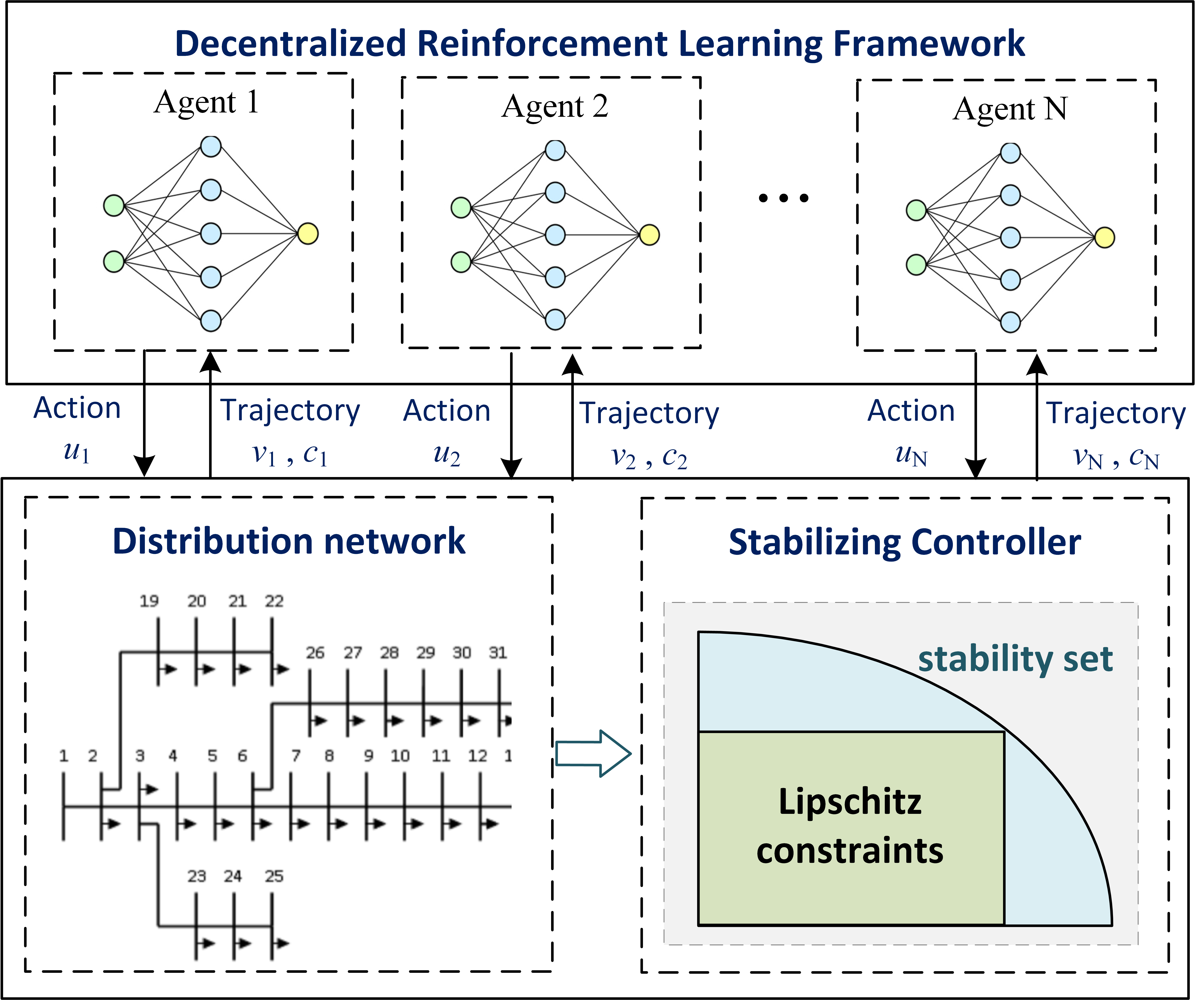}
	\caption{Proposed decentralized safe RL approach for optimal voltage control. We prove that the system is guaranteed to be exponentially
stable if each controller satisfies certain Lipschitz constraints.
The neural network controllers are engineered to satisfy these Lipschitz constraints by design, and is updated from local trajectories with a decentralized RL framework. }
	\label{fig:Structure_safeRL}
\end{figure}

Reinforcement learning algorithms are proposed to train the neural network controllers with trajectory measurements. This provides the advantages of updating neural networks in a model-free setting, i.e., eliminating the requirement on system parameters and communications~\cite{sutton2018reinforcement}. Many algorithms, such as deep Q learning~\cite{yang2019two}, actor-critic~\cite{gao2021consensus}, DDPG~\cite{duan2019deep}, have been applied to the control of tap-changing transformers or inverter based resources. Since the control actions are taken in an iterative fashion, it creates a dynamical system, whose transition depends on the actions and the underlying physical distribution network.
The key constraint on the controllers is that they do not destabilize the system. However, most works neglect the stability requirement and currently this stability condition is checked through simulations~\cite{ yang2019two,gao2021consensus}. Considering that voltage control is implemented locally without real-time communication, formal guarantees on stability are required in practice.

This paper presents a decentralized safe learning method, which guarantees the learned neural network would maintain the stability of iterative voltage control dynamics. We prove that the system is guaranteed to be exponentially stable if each controller satisfies certain Lipschitz constraints. We optimize the set of Lipschitz bounds to enlarge the search space of controllers. 
% The stabilizing controller is described to be a convex set and thus the optimal controller can be found through a convex optimization problem. 
On this basis, we propose to engineer the structure of neural network controllers such that they can satisfy the Lipschitz constraints by design.  A decentralized RL framework is constructed to train neural network controller locally at each bus with policy gradient algorithm.  The structure of the proposed approach is illustrated in Fig.~\ref{fig:Structure_safeRL}.

Case studies show that the controllers learned with stability constraints outperform those with linear controllers and unconstrained neural network controllers. Interestingly, we also observe good learning convergence of the controllers in a model-free setting, even though they interact through the underlying distribution network. Code and data are available at https://github.com/Safe-RL-Power-Systems-Control/Voltage-Control.

The paper is organized as follows. Section~\ref{sec:model} introduces the  model and the optimal voltage control problem. Section~\ref{sec:stability} gives the main theorems governing the structure of a stabilizing controller and derives the optimal Lipschitz bounds. Section~\ref{sec:training} illustrates the decentralized safe RL framework for training a stabilizing neural network controller locally at each bus.  Section~\ref{sec:results} shows the simulation results and Section~\ref{sec:conclusions} concludes the paper.

\section{Model} \label{sec:model}
A standard requirement for distribution network is that voltages should deviate no more than 5\% from their rated values at all buses~\cite{national1996american}. 
For example, if the rated voltage is 110 V, then the actual voltages should be in the interval from 104.5 V to 115.5 V.
For simplicity, we normalize the units such that the reference value for voltage is 1 p.u. 
% \begin{figure}[ht]	
% 	\centering
% 	\includegraphics[width=3.5in]{}
% 	\caption{Distribution network with inverter-based resources \cite{kim2016optimal}}
% 	\label{fig:inverters}
% \end{figure}
% Fig.~\ref{fig:inverters} shows an example system with DERs.
% The DERs produce DC power which is converted to AC power via the inverters. 
For a power network with N buses, let $\bm{v}$ be the voltage vector where $v_i$ is the voltage at bus $i$. Let $\bm{p}$ be active power and $\bm{q}$ be reactive power. The voltage of the system follows the LinDistFlow model: 
\begin{equation}\label{eq:Dyn_voltage}
\bm{v}=\mathbf{R} \bm{p}+\mathbf{X} \bm{q}+\mathbbm{1}
\end{equation}
where $\mathbbm{1}$ is the all one's vector and $\mathbf{R}$ and $\mathbf{X}$ are positive definite matrices describing the network~\cite{zhu2015fast}. 
The active power depends on external environment and is uncertain and variable. 
The reactive power comes from phase offsets and is controllable, subject to some actuation constraints~\cite{turitsyn2011options}.

This work focuses on optimizing the control of $\bm{q}$ through inverter-based resources. The aim of voltage regulation is to control $\bm{q}$ such that $\bm{v}$ is close to its reference value. 
Due to the lack of communication in many distribution systems, $\bm{q}$ needs to be successively updated based on the local voltage measurements. 
Denote $u_{i}(v_i)$ as the control law for each bus $i=1,\cdots,N$, which is a mapping from the voltage to reactive power. 
Let $v_{i, t}$ be the local voltage at the bus $i$ at the $t$-th iteration step, and denote   $\bm{u}_t=(u_{1}(v_{1, t}), \cdots, u_{N}(v_{N, t}))$. 
We update $\bm{q}$ and $\bm{v}$ iteratively as
\begin{subequations}\label{eq:Dynamic}
\begin{align}
\bm{q}_{ t+1}&=\bm{q}_{t}-\bm{u}_t , \label{subeq:Dynamic_q}\\
 \bm{v}_{t+1}&=\mathbf{R} \bm{p}+\mathbf{X} \left(\bm{q}_{t}-\bm{u}_t\right) + \mathbbm{1}, \label{subeq:Dynamic_v}
\end{align}
\end{subequations}

% In this paper we consider linear control updates. Let $k_i$ be the linear control coefficient of the $i$-th bus, and denote $\bm{K} = \operatorname{diag} ( k_{1}, \ldots, k_{N})$. The diagonal matrix $\bm{K}$ means that only local information is used for the control action of each bus.
% We update $\bm{q}$ and $\bm{V}$ iteratively as $\bm{q}_{ t+1}=\bm{q}_{t}+\bm{K}(\mathbbm{1} - \bm{V}_{t})$, $\bm{V}_{t+1}=\bm{R} \bm{p}+\bm{X} \left(\bm{q}_{t}+\bm{K}(\mathbbm{1} - \bm{V}_{t}))\right) + \mathbbm{1}$. 

% Our objective is to optimize the $k_i$'s to minimize cost subject to the above linear systems in $\bm{V}$ and $\bm{q}$ are stable. 
% % The cost function $C(\cdot)$ can be formulated as any convex function such as the $L_1$ or $L_2$ norm on $\bm{V}_t$ and $\bm{q}_t$.
% % We obtain an optimal linear coefficient matrix $K$ that guarantees the stability of the system.
% The optimization problem is challenging to solve: even with quadratic cost, there is no Algebraic Riccati Equation for diagonal $K$. 

\subsection{Optimal voltage control}

Our objective is to optimize the $\bm{u}_t$ to minimize cost in $\bm{v}$ and $\bm{q}$ defined as $C(\bm{u})$,
% $C(\bm{u})=\sum_{t=1}^{T}\left( ||\bm{q}_{t}||_1+\gamma||\bm{u}_t||_1\right)$, 
subject to the iterative update rule and the saturation limit on $\bm{u}_t$.  The optimization problem is 
% \left( ||\bm{q}_{t}||_1+\gamma||\bm{u}_t||_1\right)
\begin{subequations}\label{eq:Optimization}
\begin{align}
\min_{\mathbf{u}} & \quad C(\bm{u}) \label{subeq:Optimization_obj}\\
\mbox{s.t. } & \bm{q}_{ t+1}=\bm{q}_{t}-\bm{u}_t \label{subeq:Optimization_q}\\
 &\bm{v}_{t+1}=\mathbf{R} \bm{p}+\mathbf{X} \left(\bm{q}_{t}-\bm{u}_t\right) + \mathbbm{1} \label{subeq:Optimization_V}\\
& \underline{\bm{u}} \leq \bm{u}_t\leq \overline{\bm{u}}\label{subeq:Optimization_bound}\\
& \bm{u}_t \text{ is stabilizing}\label{subeq:Optimization_stability}
\end{align}
\end{subequations}
where constraints~\eqref{subeq:Optimization_q}-\eqref{subeq:Optimization_stability} hold for the iteration step $t$ from $0$ to $T$. The cost typically trades off between driving voltage to the reference value and the control effort. The deviation of voltage can typically be quantified as two-norm, one-norm or infinity-norm of the sequence of $\bm{v}_t$~\cite{zhang2014optimal, vaccaro2011decentralized, jafari2018optimal}.
The control effort depends on the type of resources and can be both quadratic~\cite{zhao2014design, mallada2017optimal} and non quadratic ones~\cite{shi2017using, vaccaro2011decentralized, jafari2018optimal}. For example, control effort from batteries is commonly defined as one-norm of actions since charging/discharging power affects cycle-depth linearly~\cite{shi2017using, jafari2018optimal}. The proposed safe RL approach works for all types of  cost functions listed above.
The lower and upper bound for the control action at bus $i$ are $\underline{u}_i$ and $\overline{u}_i$, respectively.

% The cost function $C(\cdot)$ can be formulated as any convex function such as the $L_1$ or $L_2$ norm on $\bm{v}_t$ and $\bm{q}_t$.
% We obtain an optimal linear coefficient matrix $K$ that guarantees the stability of the system.
The controllers $\bm{u}$ are conventionally designed to be linear (up to a thresholding by~\eqref{subeq:Optimization_bound}), which does not leverage the capability of inverter-based resources in implementing almost arbitrary control laws~\cite{johnson2013synchronization}. 
% However, we do not know which form of non-linear control law is better. As a result, the optimization is over a function space and becomes an infinite dimensional problem difficult to be solved by conventional techniques.  
% an infinite dimensional problem that 
% To overcome this difficulty, we try to find an optimal controller through reinforcement learning.
To design a flexible non-linear control law for inverter-based resources, we parameterize each controller $u_i(v_i)$ as a neural network with weight $\theta_i$, sometimes written as $u_{\theta_i}(v_i)$.

However, there remain two challenges. First, due to the lack of communication, neural network controller needs to be trained decentralizedly in each bus with local observations of voltages. Second, even if the controller is optimized and implemented locally, they need to stabilize the system defined by \eqref{subeq:Optimization_q} and \eqref{subeq:Optimization_V}.
In the next sections, we show how to design the local neural network controllers that guarantee the stability of this system, and how to train the controllers through decentralized reinforcement learning.

In this paper, we assume that the topology and parameter information of the distribution system is available. That is, we know $\mathbf{X}$, but there is no real-time communication between the buses. This assumption comes from the fact that $\mathbf{X}$ (and $\mathbf{R}$) can be estimated using smart meter data collected over a period of time~\cite{yu2018patopaem,zhang2020topology,lin2021data}, where the communication rate can be quite slow (e.g., once per day~\cite{kavetcharacterization}). Therefore, design of the controllers $\bm{u}$ can depend on $\mathbf{X}$, but the dependence must be determined offline.

\section{Stabilizing controller} \label{sec:stability}

% In this section, we derive the stabilizing property of a local controller from the Lyapunov stability theory and standard system theory. 

In this section, we derive the properties of a stabilizing local controller from the Lyapunov stability theory and standard nonlinear system theory. 
We engineer the structure of neural network to satisfy these structure properties and thus guarantee the stability of the system.  

\subsection{Reduced-order system}
We can simplify the dynamics in \eqref{eq:Dynamic} by shifting the origin of the system. 
% We derive a equivalent formulation of the voltage and reactive power iteration to prepare for the proof on stability. 
Denote $\Delta \bm{v}_t $ as the difference between voltage and its reference value $\mathbbm{1}$ at time $t$. Then we have 
\begin{equation}\label{eq:reduced_transition}
\begin{split}
\Delta\bm{v}_t &= \bm{v}_t-\mathbbm{1} \\
&= \bm{R}\bm{p} + \mathbf{X} \bm{q}_{t} \\
&=  \bm{R}\bm{p} + \mathbf{X}( \bm{q}_{t-1} - \bm{u}_{t})\\
&= (\bm{R}\bm{p} + \mathbf{X} \bm{q}_{t-1}) - \mathbf{X}\bm{u}_{t} \\
&=  \Delta\bm{v}_{t-1} - \mathbf{X}\bm{u}_{t} 
\end{split}
\end{equation}

Instead of the iteration with both $\bm{q}_t$ and $\bm{v}_t$ being the state variables, it suffices to study the dynamics in $\Delta \bm{v}_t$. In the next sections, we drop $\Delta$ and use $\bm{v}_t$ to denote $\Delta \bm{v}_t$.

\subsection{Structure property of a stabilizing controller}

The structure property of a stabilizing controller is obtained from Theorem~\ref{theorem:Exponential_stable}. It shows that as long as each controller $u_i$ satisfies the Lipschitz constraints, the system is guaranteed to be locally exponentially stable.

\begin{theorem}\label{theorem:Exponential_stable}
Suppose a vector $ \bm{k}=(k_1,\cdots, k_N)$ satisfies  $0 \prec \text{diag}(\bm{k}) \prec 2\mathbf{X}^{-1}$. Then if the derivative of controller satisfies $u_{i}(0)=0$ and $0<\frac{\mathrm{d}u_{i}(v_{i}) }{\mathrm{d} v_{i}}<k_i$ for all $i=1,\cdots,N$, the equilibrium point $\bm{v}=0$ of the dynamic system in  \eqref{eq:reduced_transition} is locally exponentially stable.
\end{theorem}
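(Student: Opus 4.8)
The plan is to construct a Lyapunov function for the reduced-order dynamics $\bm{v}_{t} = \bm{v}_{t-1} - \mathbf{X}\bm{u}(\bm{v}_{t-1})$ and show it decreases geometrically near the origin. The natural candidate is the quadratic form $V(\bm{v}) = \bm{v}^\top \mathbf{X}^{-1} \bm{v}$, which is positive definite because $\mathbf{X} \succ 0$. Writing $\bm{u}_t = \bm{u}(\bm{v}_{t-1})$ componentwise, I would compute
\begin{equation*}
V(\bm{v}_t) - V(\bm{v}_{t-1}) = -2\bm{u}_t^\top \bm{v}_{t-1} + \bm{u}_t^\top \mathbf{X} \bm{u}_t,
\end{equation*}
after substituting the update rule and expanding. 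The goal is then to show the right-hand side is strictly negative (bounded by $-c\|\bm{v}_{t-1}\|^2$ for some $c>0$) in a neighborhood of the origin whenever the controller derivative conditions hold.

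\textbf{The key step} is to linearize. Since $u_i(0)=0$ and $u_i$ is $C^1$, near the origin we have $u_i(v_i) = g_i(v_i)\, v_i$ where $g_i(v_i) \to u_i'(0) \in (0,k_i)$; more carefully, by the mean value theorem $u_i(v_i) = u_i'(\xi_i) v_i$ for some $\xi_i$ between $0$ and $v_i$, so for $\bm{v}$ in a small enough ball the effective gain $\tilde k_i(v_i) := u_i(v_i)/v_i$ satisfies $0 < \tilde k_i(v_i) < k_i$ (using continuity of $u_i'$ and the strict inequalities, possibly shrinking the neighborhood so the bound $\tilde k_i < k_i$ is preserved with a margin). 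Then $\bm{u}_t = \mathbf{D}_t \bm{v}_{t-1}$ with $\mathbf{D}_t = \text{diag}(\tilde k_1, \ldots, \tilde k_N)$, $0 \prec \mathbf{D}_t \prec \text{diag}(\bm{k})$, and the Lyapunov difference becomes
\begin{equation*}
V(\bm{v}_t) - V(\bm{v}_{t-1}) = -\bm{v}_{t-1}^\top \mathbf{D}_t\left(2\mathbf{I} - \mathbf{X}\mathbf{D}_t\right)\bm{v}_{t-1}.
\end{equation*}
I need this quadratic form to be negative definite. Equivalently, I need $\mathbf{D}_t(2\mathbf{I} - \mathbf{X}\mathbf{D}_t) \succ 0$; since $\mathbf{D}_t \succ 0$ this is equivalent (after a symmetrizing congruence by $\mathbf{D}_t^{1/2}$) to $2\mathbf{I} - \mathbf{D}_t^{1/2}\mathbf{X}\mathbf{D}_t^{1/2} \succ 0$, i.e. $\mathbf{D}_t^{1/2}\mathbf{X}\mathbf{D}_t^{1/2} \prec 2\mathbf{I}$, which is equivalent to $\mathbf{X} \prec 2\mathbf{D}_t^{-1}$, i.e. $\text{diag}(\bm{k}) \preceq \mathbf{D}_t^{-1}\cdot(\text{something})$ — more directly, $\mathbf{D}_t \prec \text{diag}(\bm{k}) \prec 2\mathbf{X}^{-1}$ combined with monotonicity gives $\mathbf{D}_t \prec 2\mathbf{X}^{-1}$, hence $\mathbf{X}\mathbf{D}_t \prec 2\mathbf{I}$ in the appropriate congruent sense, delivering the strict inequality with a uniform margin over the compact neighborhood.

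\textbf{The main obstacle} is handling the $\bm{v}$-dependence of $\mathbf{D}_t$ carefully: the congruence/monotonicity argument ``$\mathbf{D}_t \prec \text{diag}(\bm{k})$ implies $\mathbf{X}\mathbf{D}_t \prec 2\mathbf{I}$ in the symmetrized sense'' must be justified, since $\mathbf{X}\mathbf{D}_t$ is not symmetric. The clean route is to work throughout with the symmetric form $\mathbf{D}_t^{1/2}\mathbf{X}\mathbf{D}_t^{1/2}$ and use that $A \mapsto \mathbf{D}^{1/2} A \mathbf{D}^{1/2}$ is operator-monotone-compatible: $0 \prec \mathbf{D}_t \preceq \text{diag}(\bm{k})$ gives $\mathbf{D}_t^{1/2} \mathbf{X} \mathbf{D}_t^{1/2} \preceq$ a quantity dominated by $\text{diag}(\bm{k})^{1/2}\mathbf{X}\,\text{diag}(\bm{k})^{1/2} \prec 2\mathbf{I}$ — actually this sandwiching needs the intermediate step $\mathbf{D}_t^{1/2}\mathbf{X}\mathbf{D}_t^{1/2} \preceq \|\mathbf{D}_t^{1/2}\text{diag}(\bm{k})^{-1/2}\|^2 \cdot \text{diag}(\bm{k})^{1/2}\mathbf{X}\text{diag}(\bm{k})^{1/2}$, with the norm factor $<1$. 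Once negative definiteness of the Lyapunov difference is established with a uniform decrease rate on a ball $\{\|\bm{v}\| \le r\}$, and since that ball is forward-invariant (because $V$ is nonincreasing there and its sublevel sets are ellipsoids), standard discrete-time Lyapunov theory (e.g. the exponential stability version of the Lyapunov theorem, with $c_1\|\bm{v}\|^2 \le V(\bm{v}) \le c_2\|\bm{v}\|^2$ and $V(\bm{v}_t) \le \rho V(\bm{v}_{t-1})$ for some $\rho \in (0,1)$) yields local exponential stability, completing the proof.
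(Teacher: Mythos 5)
Your proof is correct, but it takes a genuinely different route from the paper. The paper uses the \emph{indirect} (linearization) method: it computes the Jacobian $\mathbf{J}(\bm{v})=\mathbf{I}-\mathbf{X}\nabla_{\bm{v}}\bm{u}$, shows by a similarity argument that its eigenvalues coincide with those of the symmetric matrix $\mathbf{I}-(\nabla_{\bm{v}}\bm{u})^{1/2}\mathbf{X}(\nabla_{\bm{v}}\bm{u})^{1/2}$, and concludes that the spectral radius is below one, so local exponential stability follows from the discrete-time linearization theorem. You instead run the \emph{direct} method with the explicit Lyapunov function $V(\bm{v})=\bm{v}^\top\mathbf{X}^{-1}\bm{v}$ and use the mean value theorem to write $u_i(v_i)=\tilde k_i(v_i)v_i$ with $0<\tilde k_i<k_i$, turning the decrement into a quadratic form in $\mathbf{D}_t$. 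Both arguments ultimately rest on the identical matrix inequality $\mathbf{D}\prec 2\mathbf{X}^{-1}\Rightarrow \mathbf{X}\prec 2\mathbf{D}^{-1}\Rightarrow \mathbf{D}^{1/2}\mathbf{X}\mathbf{D}^{1/2}\prec 2\mathbf{I}$ (inverse antimonotonicity plus congruence), which is exactly the chain the paper writes out and which you also state; the paper's version is shorter for the purely local claim, while yours buys an explicit Lyapunov certificate, a quantitative decrease rate, and a forward-invariant region of attraction estimate that the linearization argument does not provide. One blemish: in your ``main obstacle'' paragraph the sandwiching inequality $C^\top M C\preceq \|C\|^2 M$ is false in general (e.g.\ $M$ the all-ones $2\times 2$ matrix and $C=\mathrm{diag}(1,1/2)$), and the appeal to operator monotonicity of $\mathbf{D}\mapsto \mathbf{D}^{1/2}\mathbf{X}\mathbf{D}^{1/2}$ would also fail; but this detour is unnecessary, since the direct route you already give ($\mathbf{D}_t\prec \mathrm{diag}(\bm{k})\prec 2\mathbf{X}^{-1}$ by transitivity of $\prec$ among symmetric matrices, then invert and apply the congruence by $\mathbf{D}_t^{1/2}$) delivers $\mathbf{D}_t^{1/2}\mathbf{X}\mathbf{D}_t^{1/2}\prec 2\mathbf{I}$ cleanly, with the uniform margin supplied by continuity of $u_i'$ near the origin.
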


\begin{proof}
The Jacobian of the state transition dynamics in~\eqref{eq:reduced_transition} is 
\begin{equation}\label{eq:jacobian}
    \mathbf{J}(\bm{v})=\mathbf{I}-\mathbf{X}\nabla_{\bm{v}}\bm{u}
\end{equation}
where $\nabla_{\bm{v}}\bm{u}$ is the gradient of control action $\bm{u}$ with respect to $\bm{v}$ defined as
\begin{equation}\label{eq:J_v_w}
\nabla_{\bm{v}}\bm{u}=
\begin{bmatrix}
\begin{smallmatrix}
    \frac{\mathrm{d}u_1(v_1) }{\mathrm{d} v_1} & & \\
    & \ddots & \\
    & & \frac{\mathrm{d}u_{N}(v_{N}) }{\mathrm{d} v_{N}} 
\end{smallmatrix}
\end{bmatrix}.
\end{equation}

To guarantee an exponentially stable system around the equilibrium, the goal is to show that all the eigenvalues of $\mathbf{J}(\bm{v})$ have magnitude less than 1. To this end, we first show that  the eigenvalues of $\mathbf{J}(\bm{v})$ are the same as that of $\mathbf{I}-(\nabla_{\bm{v}}\bm{u})^{\frac{1}{2}}\mathbf{X}(\nabla_{\bm{v}}\bm{u})^{\frac{1}{2}}$.

Let $(\lambda, w)$ be an eigenpair for $\mathbf{I}-\mathbf{X}(\nabla_{\bm{v}}\bm{u})$. That is, $(\mathbf{I}-\mathbf{X}(\nabla_{\bm{v}}\bm{u}))w=\lambda w$. Then, we have 
\begin{equation}
\begin{split}
    &(\mathbf{I}-(\nabla_{\bm{v}}\bm{u})^{\frac{1}{2}}\mathbf{X}(\nabla_{\bm{v}}\bm{u})^{\frac{1}{2}})(\nabla_{\bm{v}}\bm{u})^{\frac{1}{2}}w\\
    =& (\nabla_{\bm{v}}\bm{u})^{\frac{1}{2}}w-(\nabla_{\bm{v}}\bm{u})^{\frac{1}{2}}\mathbf{X}(\nabla_{\bm{v}}\bm{u})w\\
    =&(\nabla_{\bm{v}}\bm{u})^{\frac{1}{2}}(\mathbf{I}-\mathbf{X}(\nabla_{\bm{v}}\bm{u}))w\\
    =& \lambda (\nabla_{\bm{v}}\bm{u})^{\frac{1}{2}} w
\end{split}
\end{equation}
Therefore, $(\lambda, (\nabla_{\bm{v}}\bm{u})^{\frac{1}{2}}w)$ is an eigenpair for $\mathbf{I}-(\nabla_{\bm{v}}\bm{u})^{\frac{1}{2}}\mathbf{X}(\nabla_{\bm{v}}\bm{u})^{\frac{1}{2}}$. To prove that the eigenvalue of $\mathbf{J}(\bm{v})$ to be strictly smaller than 1, it suffices to show that $-\mathbf{I} \prec \mathbf{I}-(\nabla_{\bm{v}}\bm{u})^{\frac{1}{2}}\mathbf{X}(\nabla_{\bm{v}}\bm{u})^{\frac{1}{2}} \prec \mathbf{I}$.

By picking the controller $\bm{u}$ such that $0<\frac{\mathrm{d}u_{i}(v_{i}) }{\mathrm{d} v_{i}}<k_i$ for all $i=1,\cdots,N$ and $0 \prec \text{diag}(\bm{k}) \prec 2\mathbf{X}^{-1}$, we have $0 \prec \nabla_{\bm{v}}\bm{u} \prec 2\mathbf{X}^{-1}$ and thus $(\nabla_{\bm{v}}\bm{u})^{-1} \succ  \frac{1}{2}\mathbf{X}$. Since  $\nabla_{\bm{v}}\bm{u}\succ 0$ is diagonal, we then have $(\nabla_{\bm{v}}\bm{u})^{\frac{1}{2}}\mathbf{X}(\nabla_{\bm{v}}\bm{u})^{\frac{1}{2}} \prec 2\mathbf{I}$ and thus $-\mathbf{I}\prec \mathbf{I}-(\nabla_{\bm{v}}\bm{u})^{\frac{1}{2}}\mathbf{X}(\nabla_{\bm{v}}\bm{u})^{\frac{1}{2}}\prec \mathbf{I}$. The right side inequality holds because $\mathbf{X}\succ 0$.

\end{proof}

% The proof of this theorem is provided in the appendix.

\subsection{Optimizing search space for neural network controllers}
Note that all the feasible stabilizing $\bm{u}$ are in a convex set described by  $\mathcal{S} = \left \{\nabla_{\bm{v}}\bm{u}| 0 \prec  \nabla_{\bm{v}}\bm{u} \prec 2\mathbf{X}^{-1} \right \} $. Since there is no communication between buses during training, each $\frac{\mathrm{d}u_{i}(v_{i}) }{\mathrm{d} v_{i}}$ needs to be bounded by a separate $k_i$ for bus $i=1, \cdots, N$. Therefore, the search space for neural network controllers is constrained by the selection of $\bm{k}$. A uniform bound  $k_i=\frac{2}{\lambda_{max}(\mathbf{X})}$ can be found in literatures~\cite{zhu2015fast}, but it might be too conservative since $\mathcal{S}$ may be much larger than the region described by $\mathcal{D} = \left \{\nabla_{\bm{v}}\bm{u}|0 \prec \nabla_{\bm{v}}\bm{u}\prec \frac{2}{\lambda_{max}(\mathbf{X})}\mathbf{1} \right \}$.

% Without communication between buses during training, this semi-definite constraint $0<\dm{u}<2 \dm{X}^{-1}$ is challenging to work with. 
% An alternative way is to set the upper bound of $\bm{k}$ as $\frac{2}{\lambda_{max}(\mathbf{X})}$ but it might be too conservative to capture the real-world dynamics: the feasible region might be much larger than this.
%The upper bound $\bm{k}$ is commonly set to be $\frac{2}{\lambda_{max}(\mathbf{X})}$.  
%However, this may be too restrictive since the feasible area may be much larger than that. (** here show the figure about the area)
Here we show an illustration on a three-bus system (with the first bus as the feeder) where $\mathbf{X}=\begin{bmatrix}
\begin{smallmatrix}
0.20 & -0.16\\
-0.16 & 0.97
\end{smallmatrix}
\end{bmatrix} $. For different Lipschitz bounds on controllers,  feasible regions  for $\nabla_{\bm{v}}\bm{u}$ are shown in Fig.~\ref{fig:search_region}.

\begin{figure}[ht]	
	\centering
	\includegraphics[width=3.5in]{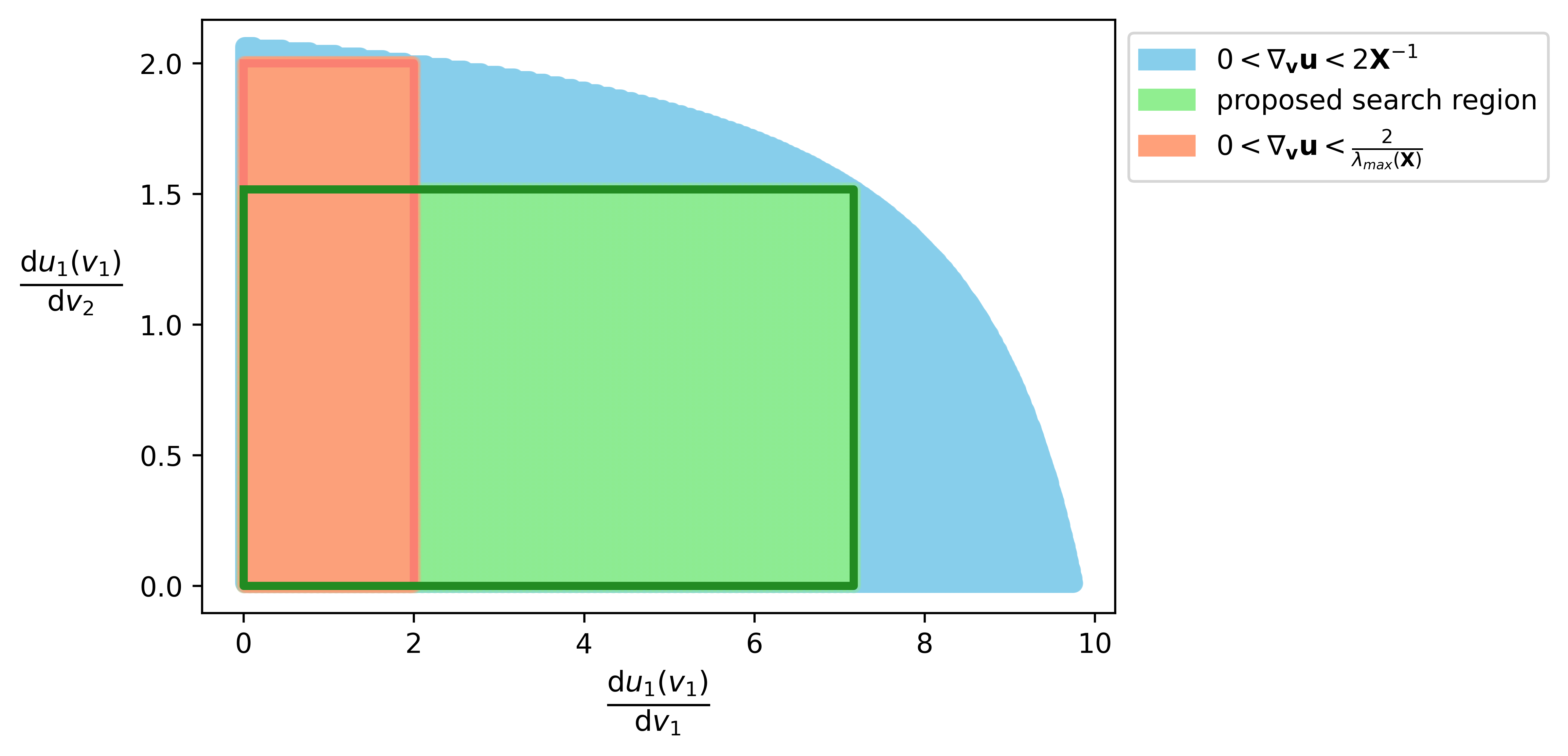}
	\caption{Feasible search space comparisons for controllers. The blue area is the set of all feasible $\bm{u}$ in $\mathcal{S} = \left \{\nabla_{\bm{v}}\bm{u}| 0 \prec  \nabla_{\bm{v}}\bm{u} \prec 2\mathbf{X}^{-1} \right \} $.
    The orange area is the search space with uniform Lipschitz bounds defined as $\mathcal{D} = \left \{\nabla_{\bm{v}}\bm{u}|0 \prec \nabla_{\bm{v}}\bm{u}\prec \frac{2}{\lambda_{max}(\mathbf{X})}\mathbf{1} \right \}$, which is the largest square within blue region but is only a very small subset of  $\mathcal{S}$. 
    With each controller being trained independently, 
    it is natural to consider some larger non-uniform search space such as the green area.}
	\label{fig:search_region}
\end{figure}

The blue area demonstrates the space of controllers constrained by $\mathcal{S} = \left \{\nabla_{\bm{v}}\bm{u}| 0 \prec  \nabla_{\bm{v}}\bm{u} \prec 2\mathbf{X}^{-1} \right \} $.
The orange area is the space defined by $\mathcal{D} = \left \{\nabla_{\bm{v}}\bm{u}|0 \prec \nabla_{\bm{v}}\bm{u}\prec \frac{2}{\lambda_{max}(\mathbf{X})}\mathbf{1} \right \}$, which is the largest square within blue region but is only a very small subset of blue area for $\mathcal{S}$. Note that the axes are scaled so the orange one does not look like a square.
With each controller being trained independently, 
it is natural to consider some larger non-uniform search space such as the green area by choosing different $\bm{k}$. We may choose a $\bm{k}^*$ such that the search space $\left \{\nabla_{\bm{v}}\bm{u}|0 \prec \nabla_{\bm{v}}\bm{u}\prec \bm{k}^* \right \}$ is the largest rectangular volume inside blue space, denoted as $\prod_{i=1}^{N}k_i$. 
% $\bm{k}^*=\max_{\bm{k}}\prod_{i=1}^{N}k_i$. 

The volume is not a convex function in $\bm{k}$, but we can apply a simple log trick and solve the following optimization problem:
\begin{subequations}\label{eq:OPT_Bound}
\begin{align}
\max_{\mathbf{k}} & \sum_{i=1}^{N} w_i \log(k_i) \label{subeq:OPT_Bound_obj}\\
\mbox{s.t. } & 0 \prec 
\begin{bmatrix}
\begin{smallmatrix}
    k_1 & & \\
    & \ddots & \\
    & & k_N\\
\end{smallmatrix}
\end{bmatrix}
\prec 2\mathbf{X}^{-1}
\label{subeq:OPT_Bound_stability}
\end{align}
\end{subequations}
where $w_1, \cdots, w_N$ are the coefficients to represent the relative importance of buses. For example, if bus $j$ has none or very limited capacity for voltage regulation, $w_j$ is set to be small. 
If bus $j$ is the source node of a branch, $w_j$ can be set to be larger to speed up the convergence of voltage at the source node and thus help the convergence of following branches. 
In reality, this set of coefficients can be adjusted according to the optimization results. 
For the controller $u_i$ that the derivative $\frac{\mathrm{d}u_{i}(v_{i}) }{\mathrm{d} v_{i}}$ is far from being bounded by $k_i$, the coefficient $w_i$ can be adjusted to be smaller to give larger searching freedom to other buses.

\subsection{Design of stabilizing neural network controllers }
From Theorems~\ref{theorem:Exponential_stable}, the structural property of locally exponentially stabilizing controllers is derived in Corollary~\ref{corollary: stablizing_structure}. We aim to engineer the neural networks to satisfy these structural property in Corollary~\ref{corollary: stablizing_structure} by design.

% we design the neural networks to have the  structure in  such that  the controller will be locally exponentially stabilizing. 
% 1) $u_{\theta_i}(v_i)$ has the same sign as $v_i$  2) $u_{\theta_i}(v_i)$ is monotonically increasing 3) $\frac{\mathrm{d}u_{\theta_i}(v_{i}) }{\mathrm{d} v_{i}}<k_i$. 
% 4) $u_{\theta_i}(v_i)<\frac{2}{\lambda_{max}(\mathbf{X})}|v_i|$ 5) $\underline{u}_i \leq u_{\theta_i}(v_i)\leq \overline{u}_i$. 
\begin{corollary}\label{corollary: stablizing_structure}
The condition for a locally exponentially stabilizing controller in Theorems~\ref{theorem:Exponential_stable} is equivalent to: 
\begin{enumerate}
    \item $u_{\theta_i}(v_i)$ has the same sign as $v_i$
    \item $u_{\theta_i}(v_i)$ is monotonically increasing
    \item $\frac{\mathrm{d}u_{\theta_i}(v_{i}) }{\mathrm{d} v_{i}}<k_i$.
\end{enumerate}
\end{corollary}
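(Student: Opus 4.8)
The plan is to prove the stated equivalence bus by bus: both the hypotheses of Theorem~\ref{theorem:Exponential_stable} and the three conditions of the corollary are posed separately for each $i$, so I fix a bus, write $u := u_{\theta_i}$, $k := k_i$, $v := v_i$, and treat $u$ as a scalar differentiable function on $\mathbb{R}$. I then establish the two implications.

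For the forward direction, assume $u(0)=0$ and $0 < u'(v) < k$ for all $v$. Condition (3) is exactly the right-hand inequality. Condition (2) holds because a function with a strictly positive derivative everywhere is (strictly) increasing. Condition (1) follows from monotonicity together with $u(0)=0$: for $v>0$ we get $u(v) > u(0) = 0$ and for $v<0$ we get $u(v) < u(0) = 0$, so $u(v)$ shares the sign of $v$; equivalently, the mean value theorem gives $u(v) = u'(\xi)v$ with $u'(\xi)>0$ for some $\xi$ between $0$ and $v$.

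For the converse, assume conditions (1)--(3). Condition (3) gives the upper bound $u'(v)<k$; evaluating (1) at $v=0$ (the only number sharing the sign of $0$ is $0$ itself) gives $u(0)=0$; and monotonicity in (2) gives $u'(v)\ge 0$. The one place that requires care is recovering the \emph{strict} lower bound $u'(v)>0$ that Theorem~\ref{theorem:Exponential_stable} asks for: monotonicity alone only yields $u'\ge 0$ (for instance $u(v)=v^3$ is increasing, sign-consistent, and still has $u'(0)=0$), and with a zero slope the Jacobian $\mathbf{I}-\mathbf{X}\nabla_{\bm v}\bm u$ can have an eigenvalue of magnitude exactly $1$. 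I expect this strict-versus-nonstrict gap to be the main obstacle, and I would close it by reading condition (2) in its strict form and, more importantly, by appealing to the explicit neural-network parametrization constructed below, which is engineered to have a slope bounded strictly inside $(0,k_i)$; under that parametrization conditions (1)--(3) and the hypotheses of Theorem~\ref{theorem:Exponential_stable} coincide exactly. Apart from this point, every step is an immediate consequence of the definition of the derivative, monotonicity, and the mean value theorem, so no lengthy computation is involved.
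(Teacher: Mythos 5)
Your proof is correct and matches the paper's intent: the paper states this corollary without any proof, treating it as an immediate restatement of the hypotheses of Theorem~\ref{theorem:Exponential_stable} ($u_i(0)=0$ and $0<\frac{\mathrm{d}u_i}{\mathrm{d}v_i}<k_i$) in a form suited to neural-network design, and your forward direction via the mean value theorem and your converse via evaluating the sign condition at $v=0$ are exactly the intended content. Your observation about the strict-versus-nonstrict gap is a legitimate catch that the paper glosses over: ``monotonically increasing'' only yields $u'\ge 0$ (e.g.\ $u(v)=v^3$), so the stated ``equivalence'' holds only if condition (2) is read as requiring a strictly positive slope everywhere; your resolution---reading it strictly and noting that the stacked-ReLU parametrization in~\eqref{eq:stacked-relu} enforces every piecewise slope to lie in the open interval $(0,k_i)$ via the strict partial-sum constraints \eqref{eq:f+2}--\eqref{eq:f-2}---is the right way to close it and is consistent with how the paper actually uses the corollary.
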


The first two requirements are equivalent to designing a monotonically increasing function through the origin. This is constructed by decomposing the function into a positive and a negative part as $f_i(v_i)=f_i^+(v_i)+f_i^-(v_i)$, where $f_i^+(v_i)$ is monotonically increasing for $v_i>0$ and zero when $v_i\leq0$; $f_i^-(v_i)$ is monotonically increasing for $v_i<0$ and zero when $v_i\geq0$.  To this end, we formulate the controller with a stacked-ReLU structure shown in Fig.~\ref{fig:structure_controller}, which is developed in~\cite{cui2020}. This design is a piecewise linear function where the slope of each piece is equal to the summation of weights in activated neurons. Then the requirement 3) can be satisfied by directly thresholding the slope.   The neural network controller is constructed as~\eqref{eq:stacked-relu}
\begin{subequations}\label{eq:stacked-relu}
\begin{align}
     u_i(v_i)&= s_i \text{ReLU}(\mathbf{1}v_i+b_i)+z_i \text{ReLU}(-\mathbf{1}v_i+d_i)\\
     \mbox{where }
     &0<\sum_{j=1}^{l} s_i^{j}< k_i, \quad  \forall l=1,2,\cdots,m
     \label{eq:f+2}\\
      & -k_i<\sum_{j=1}^{l} z_i^{j}< 0, \quad \forall l=1,2,\cdots,m
 \label{eq:f-2}\\
     & b_i^{1}=0, b_i^{l}\leq b_i^{(l-1)},\quad \forall l=2,3,\cdots,m\\
     &d_i^{1}=0, d_i^{l}\leq d_i^{(l-1)},\quad \forall l=2,3,\cdots,m\label{eq:f+3}
\end{align}
\end{subequations}
where $m$ is the number of hidden units and $\mathbf{1}\in \mathbb{R}^m$ is the all $1$'s column vector. Variables $s_i=[\begin{matrix}
 s_i^{1}& s_i^{2} & \cdots&s_i^{m}
\end{matrix}]$ and $z_i=[\begin{matrix} z_i^{1}&z_i^{2}&\cdots&z_i^{m}\end{matrix}]$ are the weight vector of bus $i$; $b_i=[\begin{matrix} b_i^{1}& b_i^{2} & \cdots &b_i^{m}\end{matrix}]^\intercal$ and $d_i=[\begin{matrix} d_i^{1}& d_i^{2} & \cdots & d_i^{m}\end{matrix}]^\intercal$ are the corresponding bias vector. The variables to be trained are weights $\bm{\theta}=\{s,b,z,d\}$ in \eqref{eq:stacked-relu}.
The saturation limits can be satisfied by hard thresholding the output of the neural network. 

\begin{figure}[ht]	
	\centering
	\includegraphics[width=3.5in]{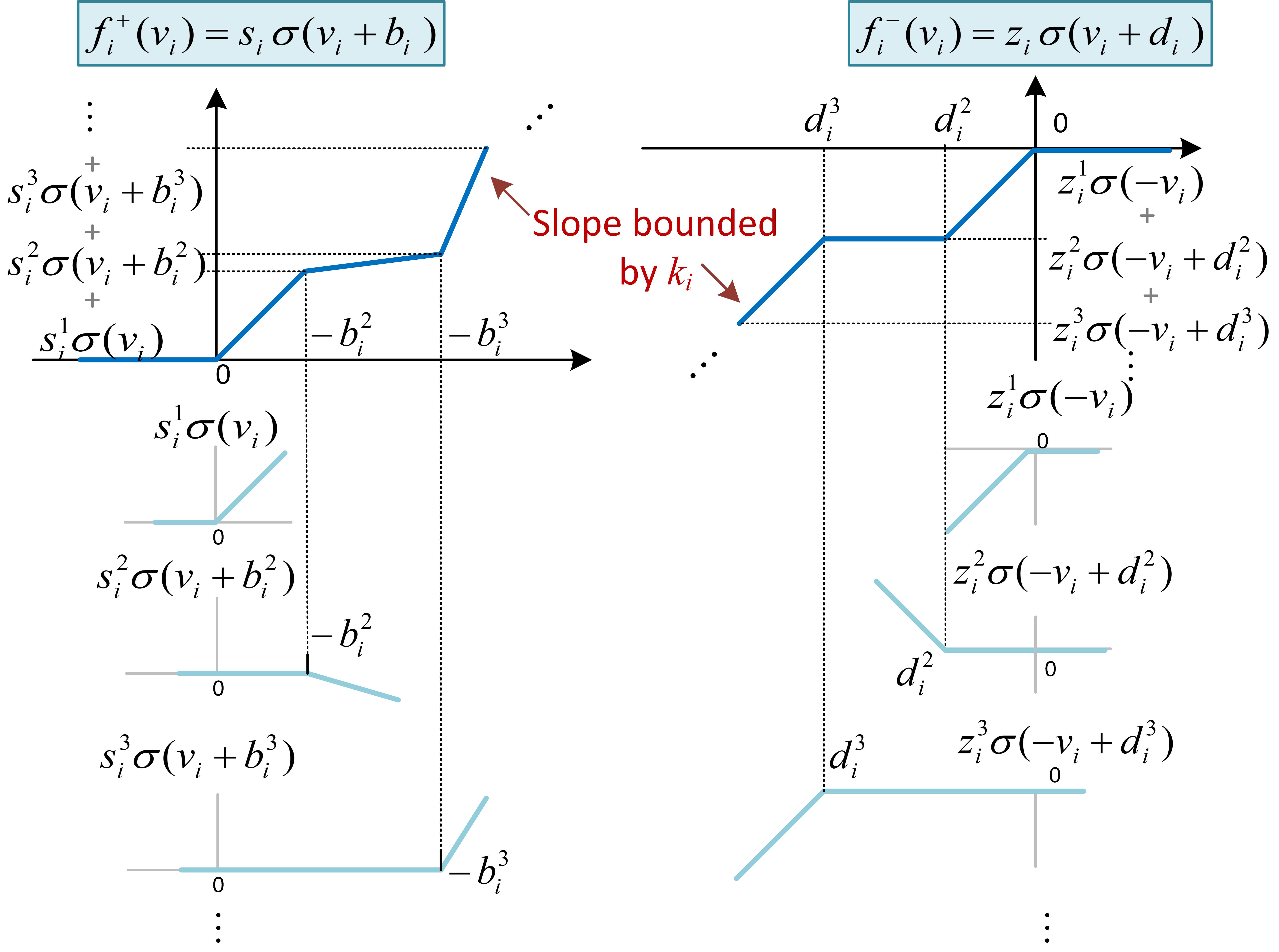}
	\caption{Stacked ReLU neural network to formulate a controller satisfying the stabilizing constraint}
	\label{fig:structure_controller}
\end{figure}

\section{Decentralized Safe Reinforcement Learning} \label{sec:training}

In this section, we construct a decentralized reinforcement learning framework to  optimize neural network controller in each bus locally with observation of trajectories.

Most reinforcement learning algorithms, including Q-learning, actor-critic and DDPG, rely on learning a value function (Q-function) satisfying the Bellman equations. Q-function assumes an infinite-horizon formulation where the states follow a stationary probability distribution, which is generally not true for the voltage control problem in this paper. Instead, REINFORCE policy gradient algorithm adopts the log probability trick and avoids learning the
value function~\cite{sutton2018reinforcement}.
Therefore, we use REINFORCE policy gradient algorithm to obtain sampled gradient for updating the weights of neural network controllers.

Notably, there are natural noises in the system coming from the changes in active power $\bm{p}$, which enable us to implement REINFORCE policy gradient with equivalent stochastic policy. Specifically, we assume that the distribution of noise on the system can be estimated. By incorporating noise term into control action, each action $u_{i, t}$ comes from an equivalent stochastic policy with probability distribution $\pi_{\theta} (u_{i, t}|v_{i, t})$. 
 The gradient for updating weights of neural network controller at bus $i$ is obtained by~\cite{sutton2018reinforcement} 
 \begin{equation}\label{eq:policy_graidient}
    \nabla J(\theta) =  \mathbb{E}[\sum_{t=1}^{T} \nabla_{\theta}\log\pi_{\theta} (u_{i, t}|v_{i, t}) \sum_{t=1}^{T} C_i(u_{i, t})] 
 \end{equation}

The pseudo-code for the decentralized RL framework is given in Algorithm 1. Each bus $i$ has its local RL agent for training in a batch-updating style. 
% After agent $i$ observing the local voltage deviation $v_i$, it implements the action $u_i$ based on the a Gaussian policy $\pi_{\theta}(v_i, u_i)=\frac{1}{\sqrt{2 \pi \sigma_i}} e^{-\frac{(u_i-u_{\theta_i}(v_i))^{2}}{2 \sigma^{2}}}$ with $\sigma_i$ be the standard deviation. 
% The action is truncated by the saturation limit and $\sigma_i$ is set to decay with training episodes for balancing exploration and exploitation.  
Let $H$ be the number of batches. At each episode, each agent collects trajectory $\left\{v_{i, 1}^h,u_{i, 1}^h,\cdots,v_{i, T}^h,u_{i, T}^h \right \}$ and the corresponding cost $c_i^h=\sum_{t=1}^{T}C_i(u_{i, t}^h)$
% $c_i^h=\sum_{t=1}^{T}|v_{i, t}^h| + \gamma |u_{i, t}^h|$
for $h=1,\cdots,H$.  
 Adam algorithm is adopted to update weights of neural network controllers with gradient computed through batch average of~\eqref{eq:policy_graidient}.
%  The gradient for updating weights of neural network controllers is obtained through ~\cite{sutton2018reinforcement} 
%  \begin{equation}
%     \nabla J(\theta) =  \frac{1}{H}\sum_{h=1}^{H} \sum_{t=1}^{T} \nabla_{\theta}\log\pi_{\theta} (u_{i, t}^h|v_{i, t}^h) c_i^h 
%  \end{equation}

% For each RL agent (bus i) in the system, we add Gaussian noise to deterministic action policy for exploration and collect batches of trajectories of length T with corresponding trajectory cost $\sum_{t=1}^{T} ( |\bm{v}_{t}^i|+\gamma |\bm{u}_t^i|)$ to calculate policy gradient and update the weights of local neural network controller. 

 \begin{algorithm}
 \caption{Decentralized Reinforcement Learning algorithm with Policy Gradient}
 \begin{algorithmic}[1]
 \renewcommand{\algorithmicrequire}{\textbf{Require:}}
 \renewcommand{\algorithmicensure}{\textbf{Input:}}
 \REQUIRE Learning rate $\alpha$, batch size $H$, trajectory length T, number of episodes $E$\\
%  , decay rate $\rho$ for standard deviation in Gaussian policy
 \ENSURE Initial weights $\theta$ for control network
%  , initial $\sigma_0$ for Gaussian policy
  \FOR {$episode = 1$ to $E$}
  \FOR {agent $ i = 1$ to $N$}
  \STATE Collect trajectories $\left\{v_{i, 1}^h,u_{i, 1}^h,\cdots,v_{i, T}^h,u_{i, T}^h \right \}$ and the corresponding cost $c_i^h=\sum_{t=1}^{T}C_i(u_{i, t}^h)$ for $h=1,\cdots,H$
  \STATE Compute the gradient $ \nabla J_i(\bm{\theta}_i) =  \frac{1}{H}\sum_{h=1}^{H} \sum_{t=1}^{T} \nabla_{\theta}\log\pi_{\theta} (u_{i, t}^h|v_{i, t}^h) c_i^h$
  \STATE  Update weights in the neural network by passing $J_i(\bm{\theta}_i)$ to Adam optimizer:
  $\bm{\theta}_i \leftarrow \bm{\theta}_i-\alpha \nabla J_i(\bm{\theta}_i) $
  \ENDFOR
%   \STATE$\sigma = \sigma_0 \exp(-episode/\rho)$
  \ENDFOR
 \end{algorithmic} 
 \end{algorithm}

\section{Numerical Results}\label{sec:results}
We verify the performance of the proposed safe RL approach on IEEE 33-bus test feeders~\cite{baran1989network}. We first show that unconstrained neural network controllers learned by RL might lead to an unstable system, while the controllers trained by safe RL approach are guaranteed to stabilize the system.
Then, we show that the proposed decentralized RL framework can learn flexible non-linear controllers for different buses that outperform conventional linear control law.

\subsection{Simulation setup}
 The cost function that each controller collectively optimizes is $C(\bm{u})=\sum_{t=1}^{T}\left( ||\bm{v}_{t}||_1+\gamma||\bm{u}_t||_1\right)$, where $\gamma$ acts as a trade-off parameter and is set to be 0.01.  
The base unit for power and voltage is 100kVA and 12.66kV, respectively.   The  bound  on  action $\Bar{\bm{u}}$  is generated to  be  uniformly  distributed  in $[0.01, 0.05]$. 
We   use   TensorFlow   2.0   framework   to   build   the   reinforcement  learning  environment. 
The episode number, batch size and the number of neurons are 500, 500, 20, respectively. Parameters of neural network controllers are updated using Adam with learning rate initialized to be 0.003 and decayed every 100 steps with a base of 0.6. We compare the performance of neural network controller designed with and without the safe RL approach, as well as conventional linear controller. All of them are trained using the decentralized RL framework.

\subsection{Necessity of the stabilizing requirement }
Intuitively speaking, if a controller achieves a low loss function after training converges, one might hope that it naturally leads to a stabilizing controller since the trajectory does not blow up to a high cost.  Fig~\ref{fig:Dynamic_Behavior_bad} shows the dynamics of voltage deviation under the neural network controllers trained with and without the safe RL approach. The one without safe RL approach is \emph{unstable} and leads to very large state oscillations (Fig.~\ref{fig:Dynamic_Behavior_bad}(b)). In contrast, the controller with safe RL approach shows good performance in Fig.~\ref{fig:Dynamic_Behavior_bad}(a). Therefore, explicitly constraining the controller structure is necessary.
\begin{figure}[ht]
\centering
\subfloat[ safe RL approach ]{\includegraphics[width=1.7in]{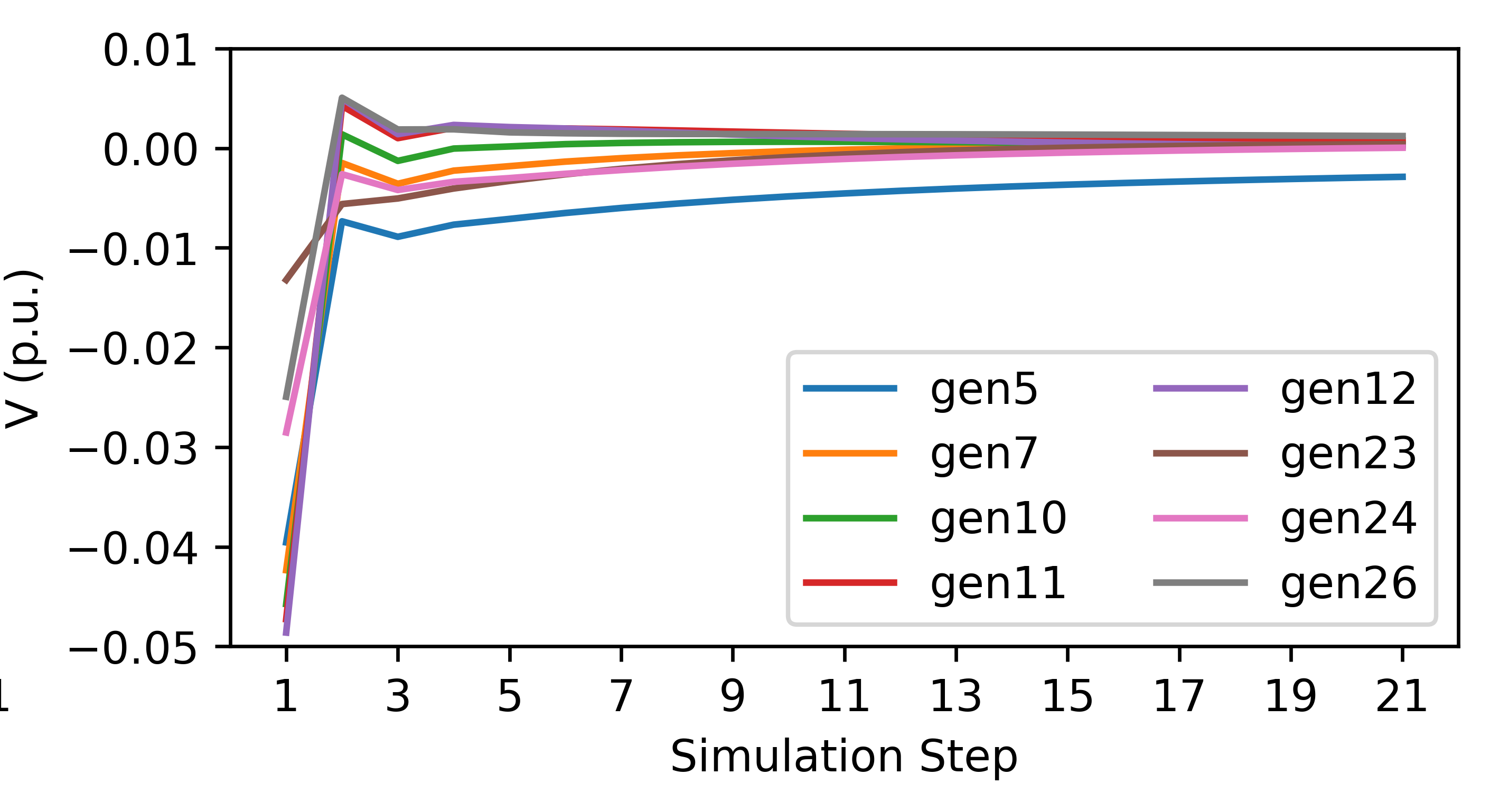}%
}
\subfloat[without safe RL approach]{\includegraphics[width=1.7in]{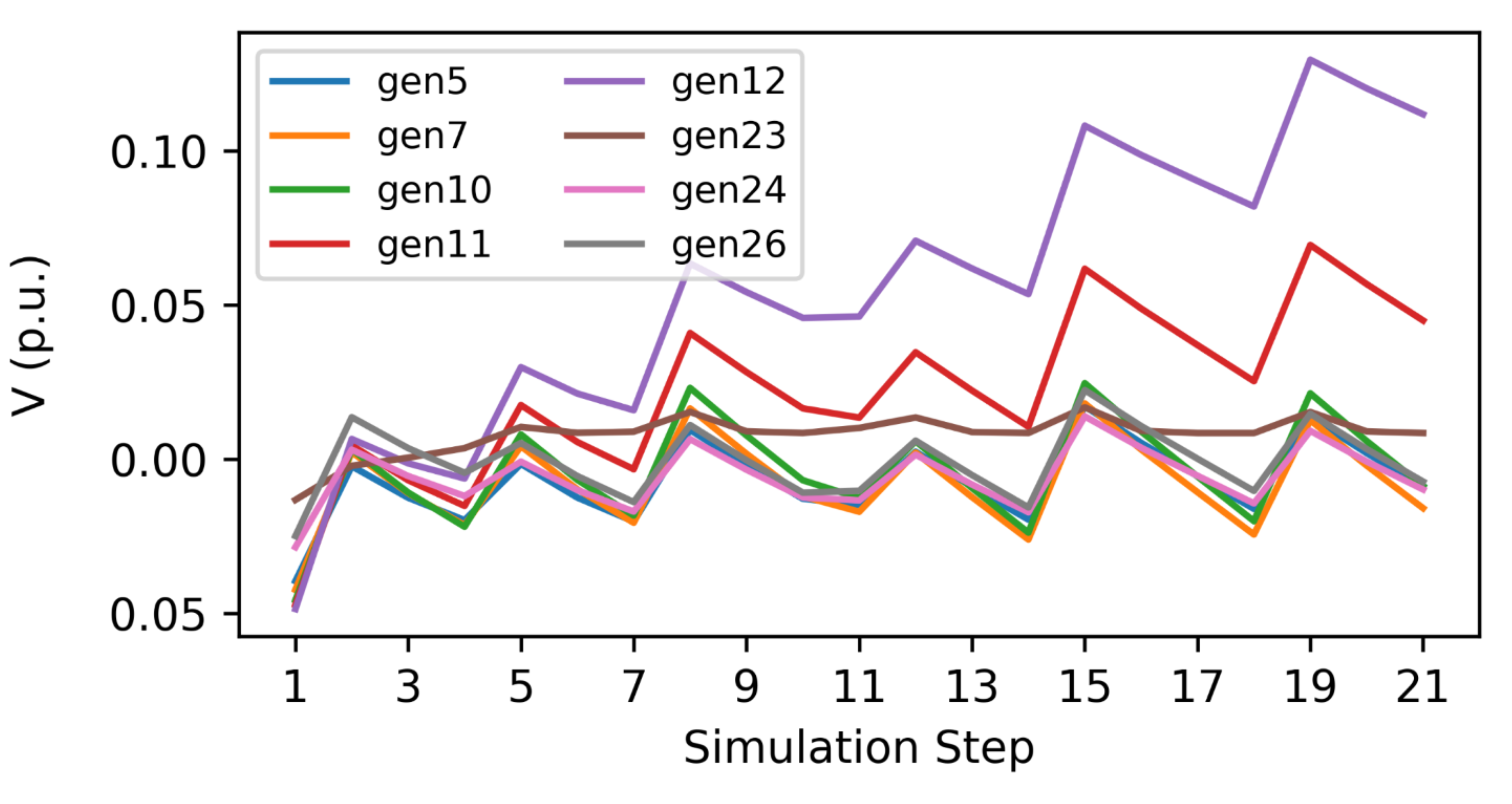}%
}
\caption{Dynamics of voltage deviation for safe RL approach(left) and without safe RL approach(right). The controller designed without the safe RL approach approach leads to unstable trajectories}
\label{fig:Dynamic_Behavior_bad}
\end{figure}

\begin{figure}[ht]	
	\centering
	\includegraphics[width=3.5in]{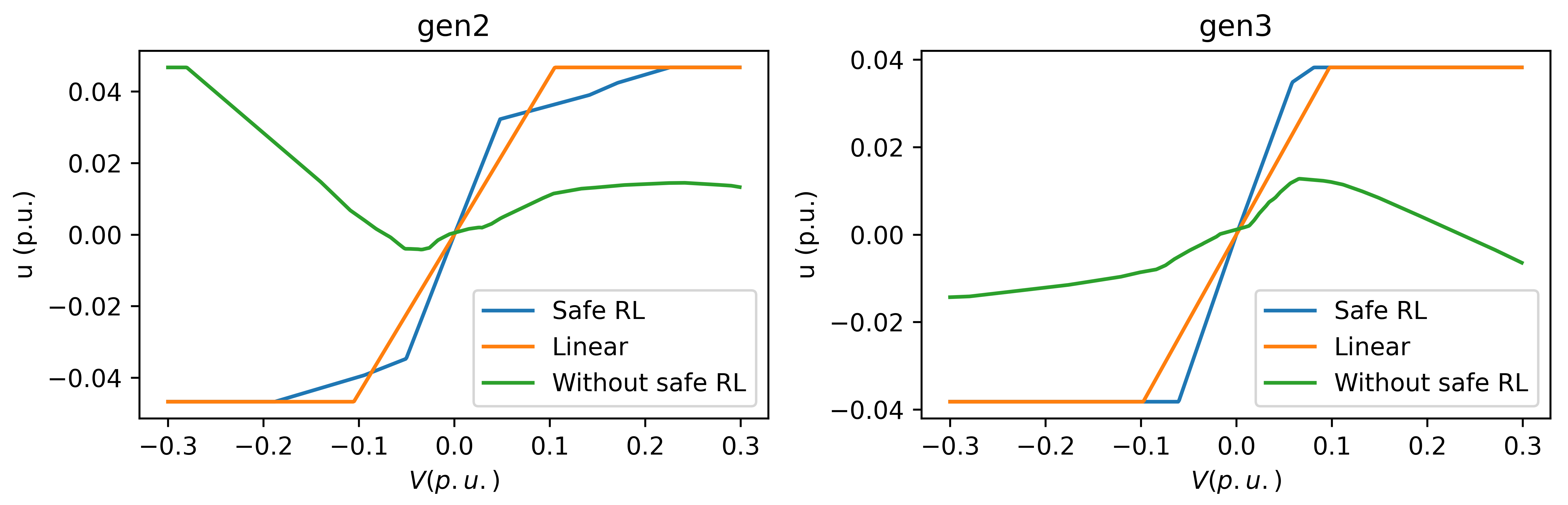}
	\caption{	Voltage control law obtained by linear controller with optimal linear coefficient,  neural network controllers designed with safe RL approach and without  safe RL approach. The neural network controllers learn flexible non-linear control laws for different buses, with the slope of controller obtained by safe RL approach bounded by  Lipschitz  constraints. }
	\label{fig:control_func}
\end{figure}

% \begin{figure}[ht]	
% 	\centering
% 	\includegraphics[width=3.5in]{}
% 	\caption{Average batch loss along episodes for neural network controller designed with and without the safe RL approach, as well as linear controller. All converges, with the safe RL converges to a lowest cost.}
% 	\label{fig:Loss_epi}
% \end{figure}

\subsection{Performance comparison}
To investigate the convergence of the safe RL approach, Fig.~\ref{fig:Cost}(a) shows the normalized cost on the test set along episodes for training of neural network controllers and linear controllers. All the losses converge, with the proposed neural network controllers achieving the lowest cost. Fig.~\ref{fig:Cost}(b) shows the cost on selected buses along the episodes of training.
% All learning trajectories converge well in the decentralized model-free setting, even though they interact through the underlying distribution network.
It is interesting to observe that training the controllers in a decentralized fashion did not impact convergence or performance. Namely, during training, $u_i$ is updated based only on the trajectory of $v_i$, even though the control action impacts the voltage at all neighboring buses.  

\begin{figure}[ht]
\centering
\subfloat[ Total cost ]{\includegraphics[width=1.7in]{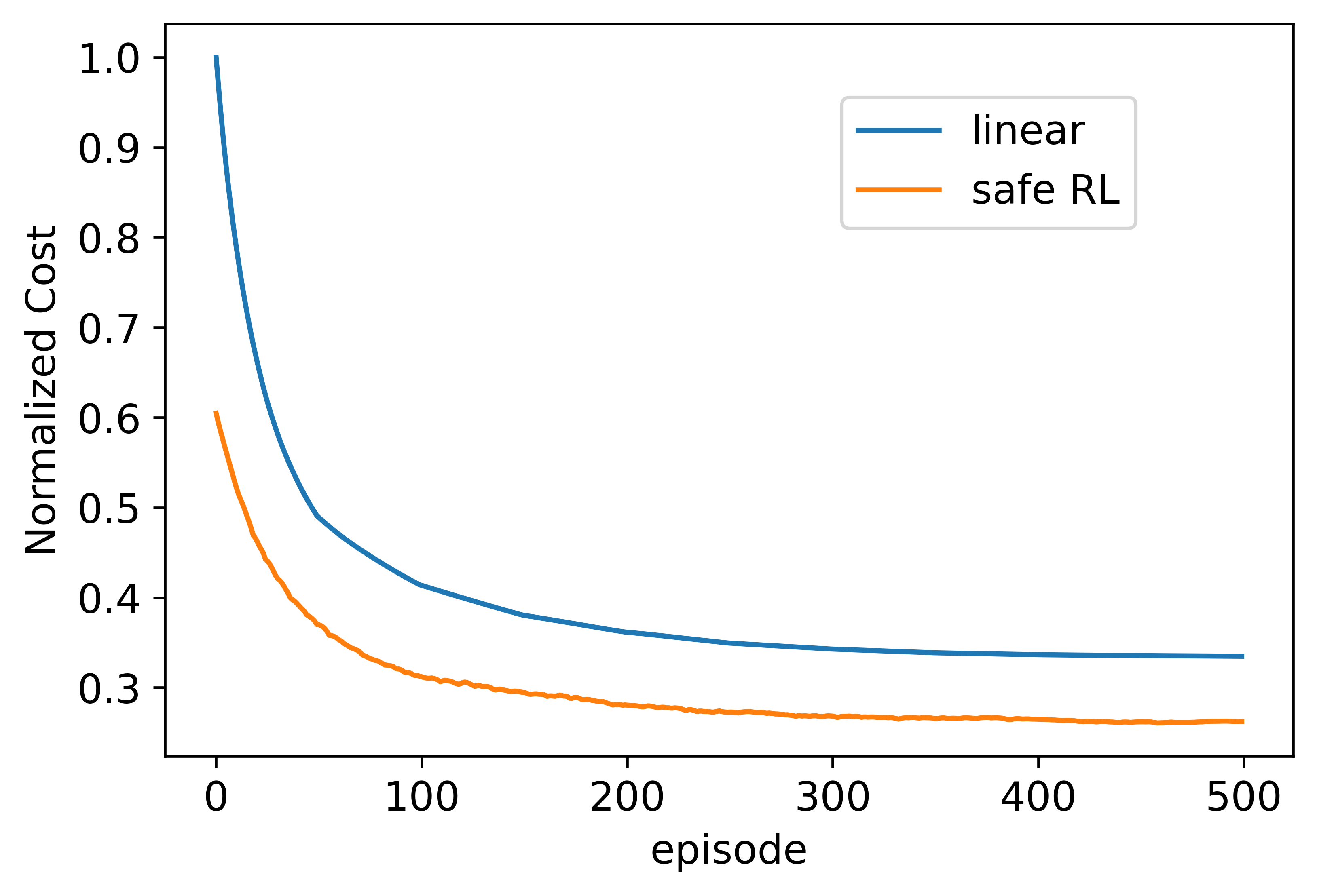}%
}
\subfloat[Cost for selected  buses]{\includegraphics[width=1.7in]{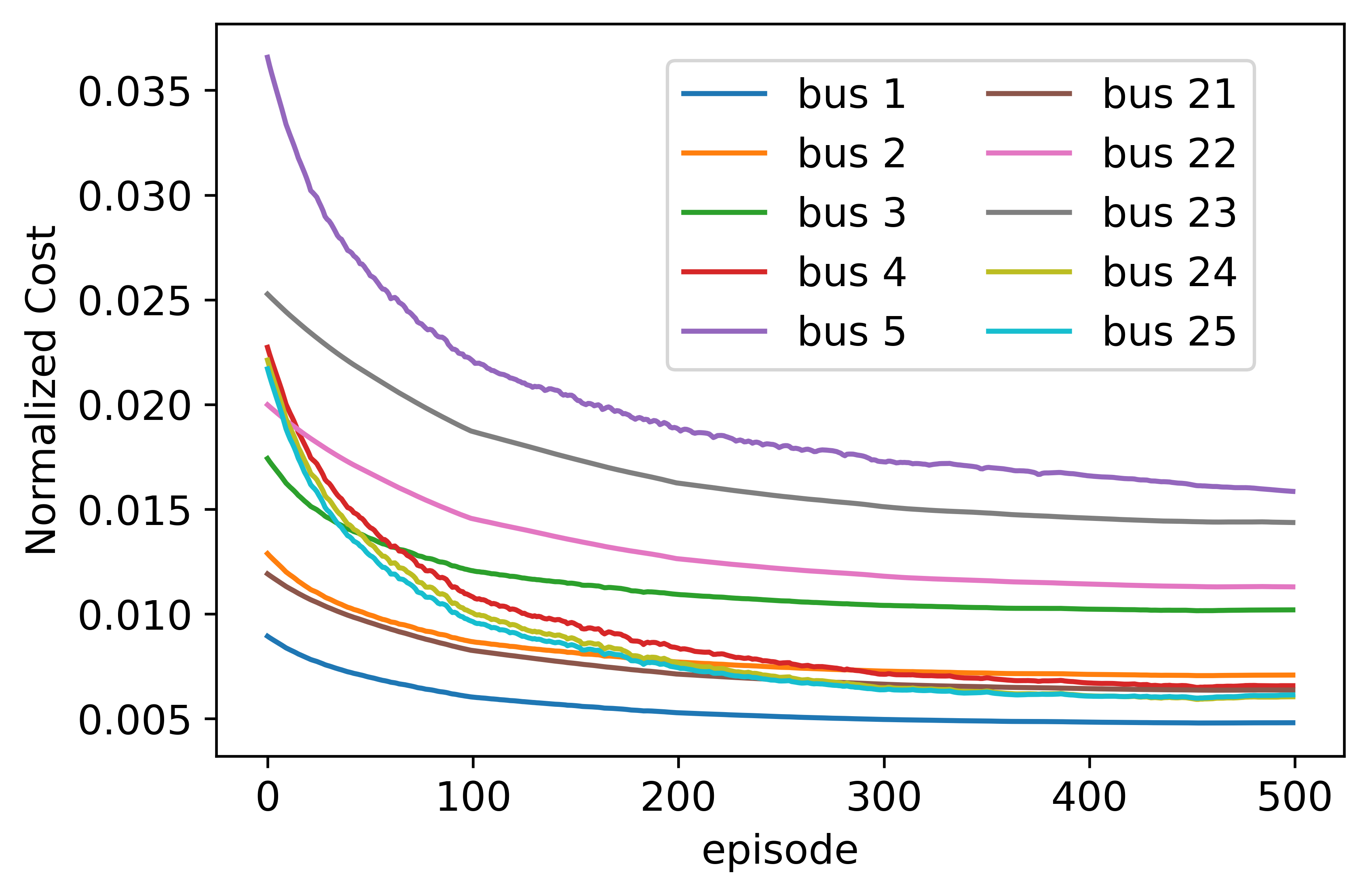}%
}
\caption{Normalized cost on test set along the episode of training. (a) Total cost during training of neural network controller and linear controller. Neural network controller designed with safe RL approach achieves lower cost than conventional linear controller. (b) Cost on selected generator buses during the training of neural network controller. All learning trajectories converge well in the  decentralized model-free setting, even though they interact through the underlying distribution network.  }
\label{fig:Cost}
\end{figure}

The control law for neural network controller learned with safe RL, without safe RL approach and linear controller with optimal linear coefficient are shown in Fig.~\ref{fig:control_func}. The neural network controllers learn flexible non-linear control law for different generators, with the safe RL approach guaranteeing a stabilizing controller by bounding the slope with Lipschitz constraints. 
Fig.~\ref{fig:Compare_RL_linear} illustrates the dynamics of voltage deviation $\bm{v}$ and corresponding control action $\bm{u}$ under optimal linear controller and neural network controller trained by safe RL approach. The neural network controller generally leads to faster decay of voltage deviation.

% \begin{figure}[ht]	
% 	\centering
% 	\includegraphics[width=3.5in]{figure/total_cost.png}
% 	\caption{	Normalized cost on test set along the episode of training for neural network controllers and linear controllers}
% 	\label{fig:Cost_total}
% \end{figure}

% \begin{figure}[ht]	
% 	\centering
% 	\includegraphics[width=3.5in]{figure/select_cost.png}
% 	\caption{Convergence of agents in each generator buses}
% 	\label{fig:Cost_select}
% \end{figure}

\begin{figure}[ht]
\centering
\subfloat[Dynamics of $v$ (left) and $u$ (right) for neural network controller obtained through safe RL approach ]{\includegraphics[width=3.5in]{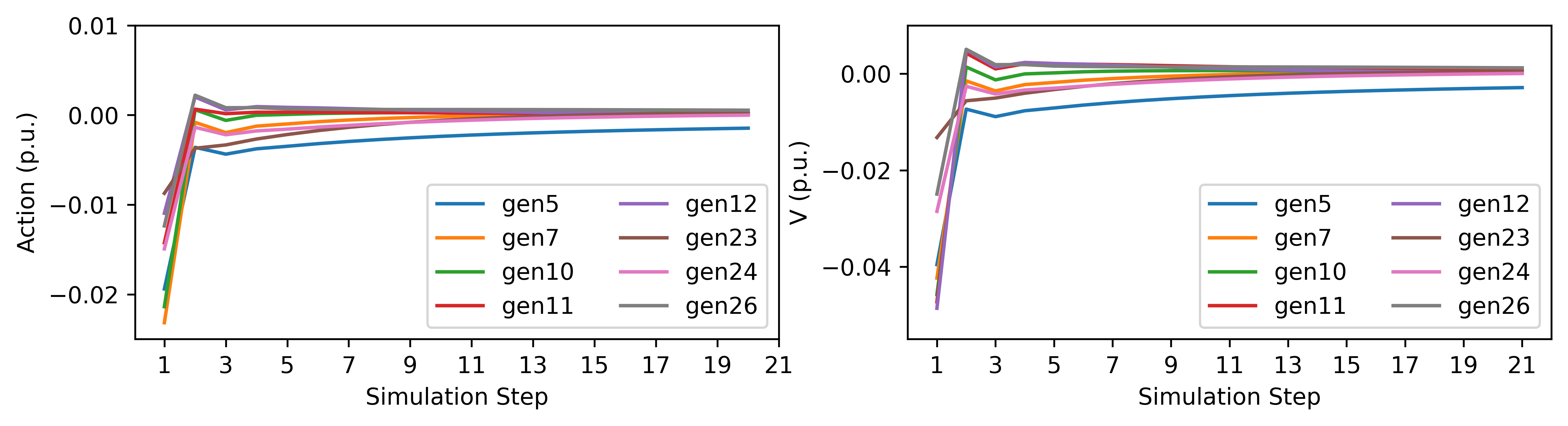}%
}
\hfil
\subfloat[Dynamics of $v$ (left) and $u$ (right) for linear control]{\includegraphics[width=3.5in]{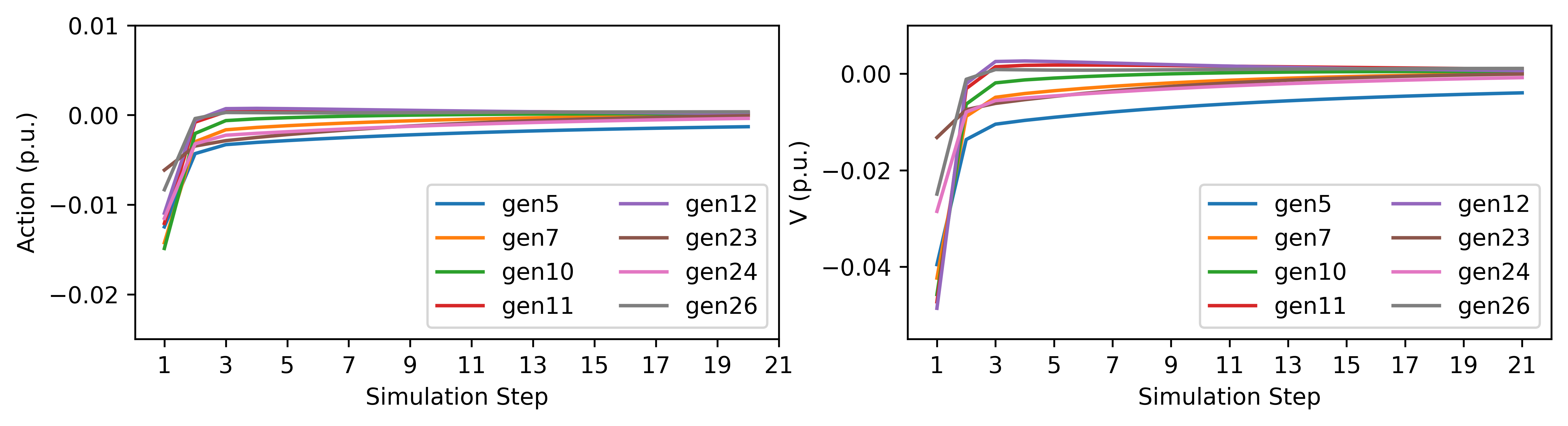}%
}
\caption{Dynamics of  the voltage deviation $\bm{v}$ and the control action $\bm{u}$ in selected generator buses corresponding to (a) neural network controller trained with safe RL approach (b) Linear control obtained by the same decentralized RL algorithm. The neural network controller generally leads to faster decay of voltage deviation. }
\label{fig:Compare_RL_linear}
\end{figure} 

In the test set with random initial states, the distribution of cost in selected buses is shown in Fig~\ref{fig:Cost_distribution}. The average costs of the linear controller, the neural network controller bounded by $\frac{2}{\lambda_{max}(\mathbf{X})}$, and the neural network controller with optimal Lipschiz bound obtained in~\eqref{eq:OPT_Bound}
are 0.44, 0.38 and 0.36, respectively. Therefore, the proposed approach can learn a stabilizing controller that reduces the cost by approximately 18.18\% compared to conventional linear control law. Moreover, safe RL with the optimal Lipschiz bound also reduces the cost by approximately  5.26\% compared to safe RL with the uniform Lipschiz  bound $\frac{2}{\lambda_{max}(\mathbf{X})}$.

% Interestingly, we observed that training the controllers in a decentralized fashion did not impact convergence or performance. Namely, during training, $u_i$ is updated based only on the trajectory of $v_i$, even though the control action impacts the voltage at all neighboring buses. Rigorously analysing and comparing decentralized and centralized training is an important future direction for us. 

\begin{figure}[ht]	
	\centering
	\includegraphics[width=3.5in]{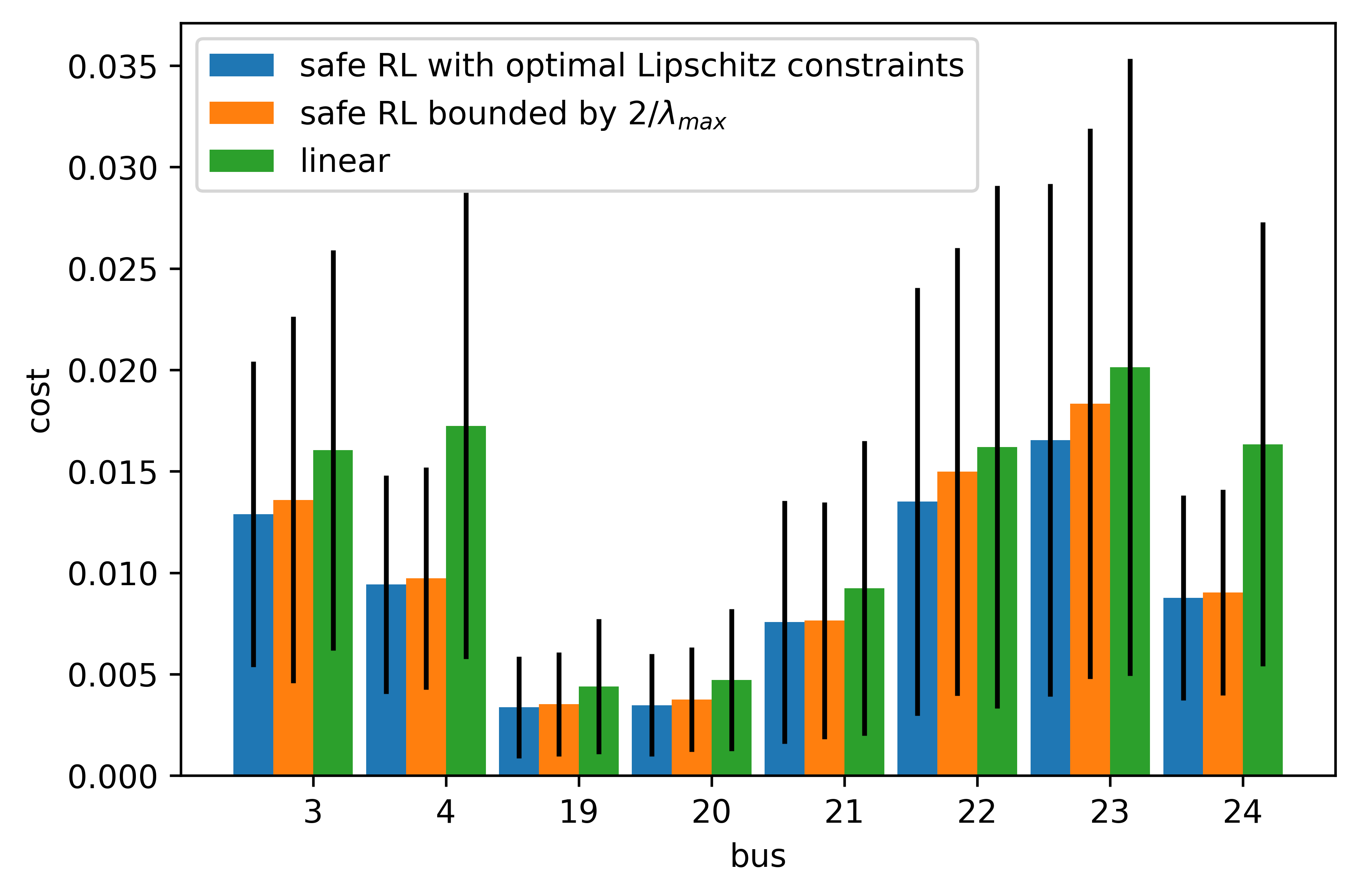}
	\caption{Distribution of cost in selected generator buses  with  random  initial  states corresponding to safe RL with proposed optimal Lipschiz constraints,  safe RL bounded by $\frac{2}{\lambda_{max}(\mathbf{X})}$ and optimal linear control. Compared to uniform bound $\frac{2}{\lambda_{max}(\mathbf{X})}$ and linear controller, the proposed approach reduces the average cost by approximately 5.26\%, 18.18\%, respectively. }
	\label{fig:Cost_distribution}
\end{figure}

\section{Conclusions}\label{sec:conclusions}
This paper proposes a safe RL approach for optimal voltage control. The exponential stability of the system is guaranteed by controllers constrainted by Lipschitz bounds, which are optimized to enlarge the search space. 
The neural network controllers are parameterized by a staked ReLU neural network to satisfy stabilizing constraints implicitly. Each bus updates weights locally with the decentralized RL framework.  Case studies show that  RL without stability constraints  can lead to unstable controllers, while the proposed safe learning approach will lead to a stabilizing controller. The neural network controllers
outperform conventional linear controllers by speeding up the convergence of voltages to reference values with relatively low control effort. Rigorously analysing and comparing decentralized and centralized training is an important future direction for us.

\bibliographystyle{IEEEtran}
\bibliography{Reference.bib}

% \section{Appendix}\label{sec:appendix}
% \input{appendix}
\end{document}